\newcommand\Tr{\mathrm{Tr}}
\newcommand{\id}{\mathbbm 1}
\newcommand{\bsa}{\mathcal B_s(\mathcal H)}
\newcommand{\tsa}{\mathcal T_s(\mathcal H)}
\newcommand{\qeff}{\mathcal E(\mathcal H)}
\newcommand{\ceff}{\mathcal E(\Omega,\Sigma)}
\newcommand{\cstate}{\mathcal S(\Omega,\Sigma)}
\newcommand{\qstate}{\mathcal S(\mathcal H)}
\newcommand{\meas}{(\Omega,\Sigma)}
\newcommand{\smeas}{\mathcal M_{\mathbb R}(\Omega,\Sigma)}
\newcommand{\measf}{\mathcal F_{\mathbb R}(\Omega,\Sigma)}
\newcommand{\bsadual}{\bsa^*}
\newcommand\abs[1]{\left|#1\right|}
\newcommand\av[1]{\left\langle #1 \right\rangle}
\newcommand\norm[1]{\|#1\|}
\newtheorem{corollary}{Corollary}
\newtheorem{theorem}{Theorem}
\begin{document}

\title{Necessity of negativity in quantum theory}
\author{Christopher Ferrie, Ryan Morris and Joseph Emerson}
\affiliation{
Institute for Quantum Computing and Department of Applied Mathematics,
University of Waterloo,
Waterloo, Ontario, Canada, N2L 3G1}

\begin{abstract}
A unification of the set of quasi-probability representations using the mathematical theory of frames
was recently developed for quantum systems with finite dimensional Hilbert spaces, in which it was proven that such representations require negative probability in either the states or the effects. In this article we extend those results to Hilbert spaces of infinite dimension, for which the celebrated Wigner function is a special case. Hence, this article presents a unified framework for describing the set of possible quasi-probability representations of quantum theory, and a proof that the presence
of negativity is a necessary feature of such representations.
\end{abstract}

\date{\today}

\maketitle



\textbf{Introduction.}  The seeds of the so-called quasi-probability distributions of quantum mechanics where sown in 1932 by Wigner and the now famous function which bears his name.  The Wigner quasi-probability distribution is a function on the classical particle phase space which represents the quantum state of a particle.  In fact, the Wigner function is prescribed by a mapping placing the quantum state space in one-to-one correspondence with (a subset of) functions on phase space.  If not for its negative values, the Wigner function would be a bona fide joint probability distribution of the position and momentum of a quantum particle.

The fact that the Wigner function is but one of infinitely many choices of such quasi-probability distributions suggests the search for a true joint probability distribution representing the quantum state of a particle.  However, it was Wigner himself who proved that such a quest is futile \cite{Wigner1971Quantum}.  Precisely, Wigner showed that a linear map $W$ which assigns to each quantum state $\rho$ a probability density $\mu_\rho(q,p)\geq0$ does not exist if it also satisfies the marginal properties:
\begin{align}
\langle q|\rho|q\rangle&=\int_{\mathbb R} \mu_\rho(q,p) dp,\\
\langle p|\rho|p\rangle&=\int_{\mathbb R} \mu_\rho(q,p) dq.
\end{align}

There is a strong tradition in physics, especially in quantum optics, of considering negativity and related features of the Wigner function and similar quasi-probability representations as a tool to identify strictly non-classical features of quantum states \footnote{See Reference \cite{Kenfack2004Negativity} for a recent review.}. However, it should be understood that any such representation is non-unique and to some extent arbitrary. Moreover, in most, if not all, of these representations the kinematic or ontic space of the representation is presumed to be the usual canonical phase space of classical physics. In the broader context of attempting to represent quantum mechanics as a classical probability theory, the measurable space need not necessarily correspond to the phase space of some classical canonical variables.  Here we consider the problem of whether it is possible to construct a classical reformulation of quantum mechanics, analogous to the one ruled out by Wigner but allowing for an arbitrary measurable space, that is, not restricted to the phase space of a classical physics. More precisely, we are considering the broad class of representation obtained by any injective affine function mapping quantum states to probability measures.  Since we are interested in mapping probabilistic state spaces to each other, we require our mappings to be affine.  Affine maps are those which preserve convex (equivalently, probabilistic) combinations \footnote{Not requiring affinity would lead to contextual ontological models \cite{Harrigan2007Ontological}.}.  We write such a map formally $T:\qstate\to\cstate$.  Here, $\meas$ denotes a measurable space, where $\Sigma$ is a $\sigma$-algebra. Over this space, $\smeas$ denotes the bounded signed measures while $\measf$ denotes the bounded measurable functions.  A signed measure generalizes the usual notion of measure to allow for negative values.  The classical states are the probability measures $\cstate\subset\smeas$.  On the quantum side, $\qstate$ denotes the convex set of quantum states; that is, density operators acting on $\mathcal H$.

Let $\bsa$ denote the bounded self-adjoint operators acting on $\mathcal H$.  It was shown in reference \cite{Stulpe1997Classical} that for every quantum observable $A\in\bsa$, every $\epsilon>0$, and any finite collection of states $\{\rho_i\}\subset\qstate$ there exists a classical observable function $f\in\measf$ such that
\[
\abs{\Tr(\rho_iA)-\int_\Omega f \;d(T\rho_i)}<\epsilon
\]
holds for each $\rho_i$.  The result tells us, remarkably, that the expectation values calculated via the Born Rule are empirically indistinguishable from those calculated using the classical rule.  For finite dimensional quantum systems a stronger result holds \cite{Busch1993Classical}: there exists an affine injective map $T:\mathcal S(\mathbb C^d)\to \mathcal S(\mathbb R^{d^2})$, from density matrices to probability vectors, such that for every Hermitian matrix $A$ there exists a random variable $r\in \mathbb R^{d^2}$ such that
\[
\Tr(\rho A)=\sum_{i=1}^{d^2} r_i (T\rho)_i.
\]
Here, the expectation values calculated via the Born Rule are \emph{exactly} those calculated using the classical rule.

The authors call the mapping $T$ a ``classical representation''.  However, the above representations possess some peculiar features.  The classical effects are the measurable functions taking values in $[0,1]$.  These are denoted $\ceff\subset\measf$ \footnote{These could also be called conditional probabilities or generalized indicator functions in analogy with an ontological models framework \cite{Spekkens2005Contextuality, Harrigan2007Ontological}.}.  The quantum effects, denoted $\qeff$, are the positive operators acting on $\mathcal H$ and bounded above by $\id$.  As shown in reference \cite{Busch1993Classical}, while the quantum observables can be reproduced using the a classical representation, the quantum effects cannot.

In the present work we show that representations constructed as above are not consistent with classical probability theory because they still require a notion of negative probability -- in particular, in their representation of measurements.  Our approach builds on the work of References \cite{Ferrie2008Frame, Ferrie2009Framed}, which develops an approach to representing quantum mechanics that takes into account the requirements for representing both the states and measurements in a self-consistent manner within the framework of classical probability theory. In that work it was shown that, for finite dimensional quantum mechanics, non-negative representations do not exist.

The results in Reference \cite{Ferrie2008Frame, Ferrie2009Framed} were stated only for those quantum systems with associated Hilbert spaces of finite dimension.  While conjectured to be true, the question of whether or not those results held for infinite dimensional systems was left open.  Here we give a more general framework consistent with both finite and infinite dimensional Hilbert spaces.  Precisely, we prove that each quasi-probability representation is equivalent to a frame representation and such representations require negativity.

We first begin with a review of the frame formalism of Reference \cite{Ferrie2008Frame, Ferrie2009Framed}.

\textbf{Frame representations in finite dimensions.}  Consider the Hilbert space $\mathbb C^d$ with $d<\infty$ and the set $\Omega$ with $d^2 \leq \abs{\Omega}< \infty$.

A \emph{frame} for the Hilbert space of Hermitian operators is a set of Hermitian operators $\{F_j:j\in\Omega\}$ which satisfy
\begin{equation}\label{frame definition finite dimensions}
a\norm{A}^2\leq \sum_{j\in\Omega} [\Tr(F_j A)]^2 \leq b\norm{A}^2,
\end{equation}
for all Hermitian operators $A$ and some constants $a,b>0$.  Note that if $\abs{\Omega}=d^2$ then a frame is equivalent to a basis.

The key feature of a frame is its \emph{reconstruction formula}: given a frame $F$ any Hermitian operator $A$ can be written
\begin{equation}\label{reconstruction formula finite dim}
A=\sum_{j\in\Omega} \Tr(F_j A) D_j,
\end{equation}
where the set of Hermitian operators $\{D_j:j\in\Omega\}$ is called a \emph{dual frame} of $F$.  When $F$ is a basis, the dual frame is unique.  Otherwise, there are infinitely many choices for a dual.

A classical representation of quantum mechanics is a pair of affine mappings $T:\mathcal S(\mathbb C^d)\to \mathcal S(\Omega)$, $S:\mathcal E(\mathbb C^d)\to\mathcal E(\Omega)$ which satisfy $S(0)=0$ and
\begin{equation}\label{law total prob finite dim}
\Tr(\rho E)=\sum_{j\in\Omega} (T\rho)_j (SE)_j,
\end{equation}
for all $\rho\in \mathcal S(\mathbb C^{d})$ and all $E\in\mathcal E(\mathbb C^{d})$.  Note that, since the classical effects are probabilities, Equation \eqref{law total prob finite dim} is the classical Law of Total Probability.  Unfortunately, a classical representation does not exist \cite{Busch1993Classical,Spekkens2008Negativity, Ferrie2008Frame, Ferrie2009Framed}.  Relaxing only the condition of positivity was proposed as the appropriate definition of a quasi-probability representation.  That is, a quasi-probability representation of quantum mechanics is a pair of affine mappings $T:\mathcal S(\mathbb C^d)\to \mathbb R^\Omega$, $S:\mathcal E(\mathbb C^d)\to\mathbb R^\Omega$ which satisfy $S(0)=0$ and
\begin{align*}
\sum_{j\in\Omega} (T\rho)_j &= 1,\\
\Tr(\rho E)&=\sum_{j\in\Omega} (T\rho)_j (SE)_j,
\end{align*}
for all $\rho\in \mathcal S(\mathbb C^{d})$ and all $E\in\mathcal E(\mathbb C^d)$.

It was shown in \cite{Ferrie2009Framed} that $(T,S)$ is a quasi-probability representation if and only if
\begin{align*}
(T\rho)_j&=\Tr(\rho F_j)\\
(SE)_j&=\Tr(E D_j),
\end{align*}
where $\{F_j:j\in\Omega\}$ is a frame and $\{D_j:j\in\Omega\}$ is one if its duals.  Armed with these results we can begin to create quasi-probability representations of quantum mechanics.  First, we choose one of the many so-called discrete Wigner functions \footnote{See Reference \cite{Ferrie2009Framed} for a review of the popular choices.}. Second, we identify the frame which gives rise to it.  Last, we compute its dual frame to obtain the second half of the quasi-probability representation.

Given a quasi-probability representation $(T,S)$, note that the frame $F$ satisfies
\[
\sum_{j\in\Omega} F_j =\id.
\]
Thus, if the quasi-probability representation satisfies $T\mathcal S(\mathcal C^d)\subset \mathcal S(\Omega)$, the frame is an informationally complete positive operator valued measure (IC-POVM).  Similarly, the dual frame $D$ satisfies
\[
\Tr(D_j)=1,
\]
for all $j\in\Omega$.  Thus, if the the quasi-probability representation satisfies $S\mathcal E(\mathbb C^d)\subset\mathcal E(\Omega)$, the dual frame is a set of density operators.

These definitions and results are tailored to the case $d<\infty$.  Now we will extend them to infinite dimensions answering the questions left open.

\textbf{Frame representations in infinite dimensions.}  For the remainder suppose that the dimension of the Hilbert space $\mathcal H$ is arbitrary and let $\meas$ be a measurable space.  In this section we define the generalization of frame to the space of self-adjoint trace-class operators $\tsa$.

Recall that a frame for Hermitian matrices is equivalent to an informationally complete \emph{operator} valued measure (no positivity required).  If not for the allowance of negative eigenvalues, a frame would be an IC-POVM.  So, we generalize the definition of informationally complete observable for infinite dimensions and define an \emph{operator valued measure} as a map $F:\Sigma\to\bsa$ satisfying $F(\emptyset)=0$, $F(\Omega)=1$ and
\[
F\left(\bigcup_{i=1}^\infty B_i\right)=\sum_{i=1}^\infty B_i,
\]
where the sets $B_i\in\Sigma$ are mutually disjoint and the sum converges in the weak sense.  A \emph{frame} for $\tsa$ is an operator valued measure $F$ for which the map $T:\tsa\to\smeas$,
\[
T(W)(B):=\Tr(WF(B)),
\]
is injective.  The map $T$ is called a \emph{frame representation} of $\tsa$.

Similarly, generalizing the reconstruction formula \eqref{reconstruction formula finite dim} yields the generalized notion of a dual.  That is, given a frame $F$, a \emph{dual frame} to $F$ is a map $D:\Omega\to\bsadual$, the dual space, for which the function
\[
(SA)(\omega):=(D(\omega))(A)
\]
is measurable and satisfies
\begin{equation}\label{reconstruction formula}
A=\int_\Omega SA dF,
\end{equation}
for all $A\in\bsa$ \footnote{We have chosen to generalize the notion of frame via the reconstruction formula.  We are not the first to generalize the familiar definition of frame.  Other generalizations, known by the name \emph{continuous frames}, are defined such that the frame remains an indexed (possibly continuously so) set which satisfies an equation analogous to equation \eqref{frame definition finite dimensions} \cite{Kaiser1990Quantum,Ali1993Continuous}}.

\textbf{Classical and quasi-probability representations.}  Now we will generalize the definition of classical representation to the more general measurable space $\meas$.  A \emph{classical representation of quantum mechanics} is a pair of mappings $T:\qstate\to\cstate$ and $S:\qeff\to\ceff$ such that
\begin{enumerate}
\item $T$ and $S$ are affine.
\item $S(0)=0$.
\item For all $\rho\in\qstate$ and $E\in\qeff$,
\begin{equation}\label{LTP in classical rep}
\Tr(\rho E)=\int_\Omega  (SE) d(T\rho).
\end{equation}
\end{enumerate}

As expected, we have
\begin{theorem}\label{T:no classical rep}
A classical representation of quantum mechanics does not exist.
\end{theorem}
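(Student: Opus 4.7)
The plan is to reduce $(T,S)$ to the frame/dual-frame form just introduced and then exploit the extremality of pure states. By the equivalence discussed in the preceding paragraphs, any classical representation is of the form $(T\rho)(B)=\Tr(\rho F(B))$ for an operator-valued measure $F$ and $(SE)(\omega)=D(\omega)(E)$ for a dual $D:\Omega\to\bsadual$. The classicality hypotheses translate cleanly: $T\rho\geq 0$ for every $\rho\in\qstate$ forces $F(B)\geq 0$ for every $B\in\Sigma$, so $F$ is an informationally complete POVM; and $(SE)(\omega)\in[0,1]$ for every effect $E$, combined with $\int(S\id)\,d(T\rho)=1$ for every $\rho$ (which pins $(S\id)(\omega)=1$ on a set $\Omega_0$ of full $T\rho$-measure simultaneously for all $\rho$), makes $D(\omega)$ a state on $\bsa$ for every $\omega\in\Omega_0$.

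Next, set $\mu_\psi:=T|\psi\rangle\langle\psi|$ and specialize Equation \eqref{LTP in classical rep} to $\rho=E=|\psi\rangle\langle\psi|$:
\[
1=\int_\Omega \langle\psi|D(\omega)|\psi\rangle\,d\mu_\psi(\omega).
\]
Because the integrand lies in $[0,1]$, we must have $\langle\psi|D(\omega)|\psi\rangle=1$ for $\mu_\psi$-almost every $\omega$, and since $D(\omega)$ is a state this forces $D(\omega)=|\psi\rangle\langle\psi|$ $\mu_\psi$-a.e. This is the integral analogue of the observation that a pure state admits no non-trivial convex decomposition. Consequently the measurable set $\Omega_\psi:=\{\omega:(S|\psi\rangle\langle\psi|)(\omega)=1\}$ satisfies $\mu_\psi(\Omega_\psi)=1$, and $\mu_\psi(\Omega_\psi\cap\Omega_\phi)=0$ whenever $|\psi\rangle,|\phi\rangle$ are non-proportional, since at every $\omega\in\Omega_0\cap\Omega_\psi\cap\Omega_\phi$ the state $D(\omega)$ would have to equal two distinct pure-state projectors.

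Finally, I would extract the contradiction from a single pair of non-orthogonal pure states, which exists in any $\mathcal H$ of dimension at least two. From $\mu_\phi(\Omega_\phi)=\langle\phi|F(\Omega_\phi)|\phi\rangle=1$ together with $F(\Omega_\phi)\leq\id$, the identity $\langle v|(\id-M)|v\rangle=0$ for positive $\id-M$ yields $F(\Omega_\phi)|\phi\rangle=|\phi\rangle$, whence $\langle\psi|F(\Omega_\phi)|\psi\rangle\geq|\langle\phi|\psi\rangle|^2$ by decomposing $|\psi\rangle$ along $|\phi\rangle$ and $|\phi\rangle^\perp$. Inclusion--exclusion then gives
\[
1\geq\mu_\psi(\Omega_\psi\cup\Omega_\phi)=\mu_\psi(\Omega_\psi)+\mu_\psi(\Omega_\phi)\geq 1+|\langle\phi|\psi\rangle|^2>1,
\]
the desired contradiction. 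The main obstacle is the passage from the finite-dimensional (discrete) extremality argument of \cite{Ferrie2009Framed} to this integral formulation: one has to verify measurability of $\Omega_\psi$, read the reconstruction formula in the appropriate weak sense on $\bsa$, and upgrade ``$D(\omega)$ is a state $T\rho$-almost surely for each $\rho$'' to ``$D(\omega)$ is a state off a joint null set.'' Once the frame framework of the preceding section is in place, these measure-theoretic refinements are the only real work.
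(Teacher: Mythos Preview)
Your argument is correct but follows a genuinely different route from the paper's. The paper extends $T$ and $S$ linearly, introduces the dual map $T'$, and shows from \eqref{LTP in classical rep} that $A=T'SA$; this forces $T'\ceff=\qeff$, which is then contradicted by invoking a result of Stulpe and Bugajski that $T'\ceff\subsetneq\qeff$ because a positive linear map preserves \emph{coexistence} of effects while not all quantum effects are coexistent. Your proof, by contrast, is self-contained: you read off from classicality that each $D(\omega)$ is a positive functional on $\bsa$ with $D(\omega)(\id)\le 1$, then use the pure-state test $\rho=E=|\psi\rangle\langle\psi|$ to pin $D(\omega)$ to the vector state at $\psi$ on a full-$\mu_\psi$-measure set $\Omega_\psi$, and finally derive a contradiction from a single non-orthogonal pair via inclusion--exclusion and the eigenvector identity $F(\Omega_\phi)|\phi\rangle=|\phi\rangle$. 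This buys you an explicit, elementary obstruction located at non-orthogonal pure states, with no appeal to the coexistence literature; the paper's route is terser once that literature is granted and situates the obstruction at a more structural level. One organizational point: you invoke the frame/dual form as if it were already available, but in the paper that equivalence is Theorem~2, proven \emph{after} the present theorem. This is harmless, since the only pieces you actually use---that $F(B):=T'\chi_B$ is a POVM and that $D(\omega)(E):=(SE)(\omega)$ defines a positive sub-normalized functional for each $\omega$---follow immediately from conditions 1--3 without the reconstruction formula; it would be cleaner to state them directly. Your worry about upgrading ``$D(\omega)$ is a state $T\rho$-a.e.'' to a joint null set is in fact unnecessary: on $\Omega_\psi$ one already has $D(\omega)(\id)\ge D(\omega)(P_\psi)=1$, so $D(\omega)$ is automatically a state there, and $\Omega_\psi\cap\Omega_\phi=\emptyset$ holds pointwise for non-proportional $\psi,\phi$.
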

\begin{proof}
Suppose $T$ and $S$ form a classical representation of quantum mechanics.  Conditions 1 and 2 imply that $T$ and $S$ can be extended to bounded linear functions $T: \tsa \rightarrow \smeas$ and $S: \bsa \rightarrow \measf$. Thus for all $W \in \tsa$ and $A \in \bsa$, condition 3 gives
\begin{equation}\label{equation in theorem 1}
	\Tr(WA) = \int_\Omega (SA) d(TW) = \Tr(W (T'SA)),
\end{equation}
where $T'$ denotes the dual of the map.  Hence for all $A \in \bsa$, $A = T'SA$.  $S$ must be injective, and so $S^{-1}$ exists, and if $R(S)$ is the range of $S$, then $T'|_{R(S)} = S^{-1}$.  Therefore

\[ T'\ceff \supset T'|_{R(S)}\ceff = S^{-1}\ceff \supset \qeff. \]

Also, if $f \in \ceff$ and $W\in\qstate$, then

\[ 0 \leq \int_\Omega f d(TW) = \Tr(W(T'f))
			\leq \int_\Omega d(TW) = 1, \]
so $T'f \in \qeff$ and $T'\ceff \subseteq \qeff$.  Therefore, $T'\ceff = \qeff$.

However, this is a contradiction since $T'\ceff\subset\qeff$ is a proper inclusion \cite{Stulpe1997Classical}.  A short and intuitive proof that the inclusion is proper was briefly mentioned by Bugajski in Reference \cite{Bugajski1993Classical}.  The proof relies on the notion of \emph{coexistent effects} (See reference \cite{Hellwig1969Coexistent} and references to G. Ludwig therein).  Intuitively, two (classical or quantum) effects are coexistent if they can be measured together.  Any two classical effects are coexistent while two quantum effects are not necessarily coexistent.  Since $T'$ is a linear operator, it preserves the coexistence of effects.  That is, the set $T'\ceff$ is one in which any two elements (quantum effects) are coexistent.  But, again, not all quantum effects can be measured together.  Hence the inclusion is proper.
\end{proof}

In analogy with the finite dimensional case, we generalize the definition of a classical representation to allow for ``negative probabilities''. A \emph{quasi-probability representation of quantum mechanics} is a pair of mappings $T:\qstate\to\smeas$ and $S:\qeff\to\measf$ such that
\begin{enumerate}
\item $T$ and $S$ are affine, $T$ is bounded.
\item $T\rho(\Omega)=1$.
\item $S(0)=0$.
\item For all $\rho\in\qstate$ and $E\in\qeff$,
\begin{equation}\label{LTP in classical rep2}
\Tr(\rho E)=\int_\Omega  (SE) d(T\rho).
\end{equation}
\end{enumerate}

Since a quasi-probability representation in which $T\qstate\subset\cstate$ and $S\qeff\subset\ceff$ is a classical representation, we have immediately from Theorem \ref{T:no classical rep},
\begin{corollary}
A quasi-probability representation of quantum mechanics must have, for some $\rho\in\qstate$, $E\in\qeff$ either $(T\rho)(B)<0$ for some $B\in\Xi$ or $(SE)(\omega) \not\in [0,1]$ for some $\omega\in\Omega$.
\end{corollary}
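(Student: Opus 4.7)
The plan is to argue by contraposition, reducing directly to Theorem~\ref{T:no classical rep}. I would suppose, toward contradiction, that \emph{neither} disjunct of the conclusion holds: that is, $(T\rho)(B)\geq 0$ for every $\rho\in\qstate$ and every $B\in\Sigma$, and $(SE)(\omega)\in[0,1]$ for every $E\in\qeff$ and every $\omega\in\Omega$. Under this hypothesis the goal is to check that $(T,S)$ meets every requirement of a classical representation, at which point Theorem~\ref{T:no classical rep} immediately yields the contradiction.

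The verification proceeds in two small steps. First, I would note that condition 2 of the quasi-probability definition gives $(T\rho)(\Omega)=1$, which combined with the assumed non-negativity of $T\rho$ on each $B\in\Sigma$ shows that $T\rho$ is a probability measure, so $T\qstate\subset\cstate$. Second, the hypothesis $(SE)(\omega)\in[0,1]$, together with the measurability already built into the codomain $\measf$ of $S$, gives $S\qeff\subset\ceff$. Combined with the affinity of $T$ and $S$, the identity $S(0)=0$, and the Born-rule identity \eqref{LTP in classical rep2} inherited from the quasi-probability definition, these are precisely the three defining properties of a classical representation.

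Since Theorem~\ref{T:no classical rep} rules out the existence of such a representation, the contrapositive supposition must fail, so at least one of the two disjuncts in the statement holds for some $\rho$, $E$, $B$, or $\omega$. I do not anticipate a substantive obstacle here: the corollary is essentially a restatement of the no-go theorem in terms of the failure of non-negativity, and the only mild care required is confirming that the range conditions imposed in the contrapositive really do place $T\rho$ into $\cstate$ and $SE$ into $\ceff$ as those sets were defined earlier. Once that matching of definitions is made explicit, the conclusion is immediate.
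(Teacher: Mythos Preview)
Your argument is correct and matches the paper's approach exactly: the paper simply observes, in one line preceding the corollary, that a quasi-probability representation with $T\qstate\subset\cstate$ and $S\qeff\subset\ceff$ is a classical representation, and then invokes Theorem~\ref{T:no classical rep}. Your proposal spells out the contrapositive and the verification that the range conditions land $T\rho$ in $\cstate$ and $SE$ in $\ceff$, which is precisely the content the paper leaves implicit.
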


Note that, strictly speaking, we could have $SE(\omega) > 1$.  However, we still refer to this as negativity since $SE(\omega)$ is meant to be interpreted as a probability and $1-SE(\omega)<0$ should stand on the same footing.  Thus, the above corollary gives us the necessity of negativity in quasi-probability representations.


\textbf{Necessity of frames.}  It is easy to verify that every frame representation defines a quasi-probability representation.  The converse is also true as established by the following theorem:
\begin{theorem}
The pair $(T,S)$ is a quasi-probability representation of quantum mechanics if and only if $T$ is a frame representations of $\tsa$ and $S$ is a dual frame.
\end{theorem}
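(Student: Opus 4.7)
The plan is to dispatch the backward direction as a direct verification and concentrate the work on the forward direction. For the backward direction, if $T(W)(B) = \Tr(W F(B))$ for an operator-valued measure $F$ and $(SA)(\omega) = D(\omega)(A)$ for a dual frame $D$, then affinity and boundedness of $T$ follow from linearity of the trace, $T\rho(\Omega) = \Tr(\rho F(\Omega)) = \Tr(\rho) = 1$ because $F(\Omega) = \id$, $S(0) = 0$ since each $D(\omega)$ is linear, and pairing the reconstruction formula \eqref{reconstruction formula} against $\rho$ yields the Law of Total Probability \eqref{LTP in classical rep2}.

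For the forward direction, I would first borrow the opening move of the proof of Theorem \ref{T:no classical rep}: affinity together with boundedness allows a unique extension of $T$ to a bounded linear map $\tsa \to \smeas$ and of $S$ to a bounded linear map $\bsa \to \measf$, and the Law of Total Probability extends by linearity to $\Tr(WA) = \int_\Omega (SA)\, d(TW)$ for all $W \in \tsa$ and $A \in \bsa$.

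Next I would construct the operator-valued measure $F$. For each fixed $B \in \Sigma$, the map $W \mapsto (TW)(B)$ is a bounded linear functional on $\tsa$, so via the standard duality $\tsa^* \cong \bsa$ it is represented by a unique $F(B) \in \bsa$ with $(TW)(B) = \Tr(W F(B))$. The identities $F(\emptyset) = 0$ and $F(\Omega) = \id$ then follow from $TW(\emptyset) = 0$ and $T\rho(\Omega) = 1$ together with linearity; weak countable additivity of $F$ reduces through the duality to $\sigma$-additivity of the signed measures $TW$, which is automatic from $TW \in \smeas$. Injectivity of $T$ is forced by the extended LTP: if $TW_1 = TW_2$, then $\Tr(W_1 A) = \Tr(W_2 A)$ for all $A \in \bsa$, hence $W_1 = W_2$. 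So $T$ is a frame representation.

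Finally, for each $\omega \in \Omega$ the map $A \mapsto (SA)(\omega)$ is linear and bounded on $\bsa$, so it defines $D(\omega) \in \bsadual$ by $D(\omega)(A) := (SA)(\omega)$; measurability of $\omega \mapsto D(\omega)(A)$ is built in since $SA \in \measf$. The reconstruction formula $A = \int_\Omega SA\, dF$ is, by construction, nothing but the statement $\Tr(WA) = \int_\Omega (SA)\, d(TW)$ for all $W \in \tsa$, i.e., the extended LTP. I expect the main obstacle to be the functional-analytic housekeeping in the middle step: justifying the representation through $\tsa^* \cong \bsa$ uniformly in $B$, verifying weak $\sigma$-additivity of $F$, and making sure the linear extensions of $T$ and $S$ respect the measurability and integrability required for the reconstruction formula to make literal sense in the infinite-dimensional setting.
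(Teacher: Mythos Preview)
Your proposal is correct and follows essentially the same route as the paper. The only cosmetic difference is notation: where you invoke the duality $\tsa^* \cong \bsa$ to produce $F(B)$ from the functional $W \mapsto (TW)(B)$, the paper writes this as $F(B) := T'\chi_B$ via the adjoint of $T$, and where you set $D(\omega)(A) := (SA)(\omega)$ directly, the paper writes $D_\omega := S^*\delta_\omega$; the remaining steps (extension to linear maps, injectivity from the LTP, the reconstruction formula) match line for line.
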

\begin{proof}
Assume $(T,S)$ is a quasi-probability representation.  Define $F(B):=T'\chi_B$.  By the definition of the dual map
\[
\Tr[\rho F(B)]=\Tr[\rho T'\chi_B]=\int_\Omega \chi_B d(T\rho)=(T\rho)(B).
\]
It is clear then that $F(\emptyset)=0$ and $F(\Omega)=1$ by normalization.  The $\sigma$-additive can be verified by directly substituting the union of an arbitrary sequence of disjoint sets $\{B_i\}\subset \Sigma$.  It is known that $T$ and $S$ can be uniquely extended to bounded linear mappings $T:\tsa\to\smeas$ and $S:\bsa\to\measf$.  Now suppose $TW$ is the zero measure for some $W\in\tsa$.  From the last property of a quasi-probability representation, the Law of Total Probability, we have
\[
\Tr(W A)=\int_\Omega (SA) d(TW)=0.
\]
Since this is true for all $A\in\bsa$, $W=0$.  Therefore $T$ is injective, and we have shown that $T$ is a frame representation.  Now define $D_{\omega}=S^*\delta_\omega$.   So we have
\[
D_{\omega}(A)=S^*\delta_\omega(A)=\int_\Omega (SA) d\delta_\omega=(SA)(\omega).
\]
And, again from the Law of Total Probability,
\[
\Tr[WA]=\int_\Omega (SA) d\Tr[WF]=\Tr\left[W \int_\Omega(SA)dF\right].
\]
Hence
\[
A=\int_\Omega (SA) dF,
\]
and by definition $D$ is dual to $F$.
\end{proof}

\textbf{Conclusion.}  We conclude with an example: the celebrated Wigner function.  We want a joint probability distribution $\mu_\rho(p,q)$ for the state of the quantum system.  From the postulates of quantum mechanics we have a rule for calculating expectation values.  In particular, we can compute the characteristic function
\begin{equation*}\label{Wigner characteristic function}
\phi(\sigma,\mu):=\av{e^{i(\sigma p+\mu q)}}=\Tr(e^{i(\sigma P+\mu Q)}\rho),
\end{equation*}
where $Q$ and $P$ are the position and momentum operators.  Since the characteristic function is just the Fourier transform of the joint probability distribution, we simply invert to obtain
\begin{equation}\label{Wigner Fourier invert of characteristic}
\mu_\rho(p,q)=\frac{1}{(2\pi)^2}\iint_{\mathbb R^2} \Tr(e^{i(\sigma P+\mu Q)}\rho) e^{-i(\sigma p +\mu q)}d\sigma d\mu,
\end{equation}
which is the \emph{Wigner function} of $\rho$ \cite{Wigner1932Quantum}.  The mapping $T:\rho\mapsto\mu_\rho$ is linear and injective and hence we can find a frame $F$ such that
\begin{equation*}\label{Wigner as frame rep}
\mu_\rho(p,q)=\Tr\left[ F(p,q) \rho\right].
\end{equation*}
And, indeed, it is quite easy to see that the frame is
\begin{equation*}\label{Wigner frame}
F(p,q)=\frac{1}{(2\pi)^2}\iint e^{i\sigma (P-p)+i\mu (Q-q)}d\sigma d\mu.
\end{equation*}
This frame has the interesting property of being \emph{tight}: it posses a dual frame which is proportional to itself.  Thus the Wigner function can be inverted via the reconstruction formula
\begin{equation*}\label{Wigner inverse}
\rho = 2\pi\iint_{\mathbb R^2} \mu_\rho(p,q) F(p,q) dpdq.
\end{equation*}
Defining the dual frame as $S:E\mapsto 2\pi\Tr[F(q,p) E]$ gives us a full quasi-probability representation of states and measurements.

In this paper we have shown that the frame representations provide a unification of both finite and infinite dimensional quasi-probability representations.  We have also proven that no classical representation of quantum theory exists or, equivalently, any quasi-probability representation of quantum theory requires negativity.

\textbf{Acknowledgements.}  The authors thank Werner Stulpe for many helpful comments.  This work was financially supported by the government of Canada through NSERC and the Canadian Institute for Advanced Research.

\bibliography{csferrie}
\end{document}